\newcommand{\llbracket}{[\![}
\newcommand{\rrbracket}{]\!]}
\begin{document}
%
\title{\Huge{Chaotic iterations versus Spread-spectrum: topological-security and
stego-security}}


\author{Christophe Guyeux, Nicolas Friot, and Jacques M. Bahi\\
\\
Computer Science Laboratory LIFC\\
University of Franche-Comt\'{e}\\
rue Engel Gros, Belfort, France\\
\\
\{christophe.guyeux, nicolas.friot, jacques.bahi\}@lifc.univ-fcomte.fr\\
}


%



\maketitle

\noindent\textbf{Keywords:} Information hiding; Chaotic
iterations; Stego-security; Topological-security;\\Spread-spectrum.

\begin{abstract}
A new framework for information hiding security, called topological-security,
has been proposed in a previous study. It is based on the evaluation of unpredictability
of the scheme, whereas existing notions of security, as stego-security, are more
linked to information leaks. It has been proven that spread-spectrum techniques,
a well-known stego-secure scheme, are topologically-secure too. In this paper,
the links between the two notions of security is deepened and the usability of
topological-security is clarified, by presenting a novel data hiding scheme that
is twice stego and topological-secure. This last scheme has better scores than
spread-spectrum when evaluating qualitative and quantitative topological-security
properties. Incidentally, this result shows that the new framework for security
tends to improve the ability to compare data hiding scheme.

\end{abstract}

\newpage

%

\section{Introduction}\label{sec:introduction}

Information hiding has recently become a major digital technology~\cite{1411349},~\cite{Xie09:BASecurity}, especially with the increasing importance and widespread distribution of digital media through the Internet.
Spread-spectrum data-hiding techniques have been widely studied in recent years
under the scope of security. These techniques encompass several schemes, such as 
Improved Spread Spectrum~(ISS), Circular Watermarking~(CW), and Natural 
Watermarking~(NW). Some of these schemes have revealed various 
security issues. On the contrary, it has
been proven in~\cite{Cayre2008} that the Natural Watermarking technique is 
stego-secure. This stego-security is one of the security classes defined 
in~\cite{Cayre2008}. In this paper, probabilistic models are used to 
categorize the security of data hiding algorithms in the Watermark Only 
Attack~(WOA) framework.

We will show that the security level of such algorithms can be studied into a 
novel framework based on unpredictability, as it is understood in the 
theory of chaos~\cite{Devaney}. To do so, a new class of security will be 
introduced, namely the topological-security. This new class can be used to
study some categories of attacks that are difficult to investigate in the existing 
security approach. It also enriches the variety of qualitative and quantitative 
tools that evaluate how strong the security is, thus reinforcing the confidence 
that can be had in a given scheme. 

In addition of being stego-secure, it has been proven in~\cite{ih10} that 
Natural Watermarking technique is topologically-secure. Moreover, this technique 
possesses additional properties of unpredictability, namely, strong transitivity,
topological mixing, and a constant of sensitivity equal to $\frac{N}{2}$.
However NW are not expansive, which is problematic in the Constant-Message 
Attack (CMA) and Known Message Attack (KMA) setups~\cite{ih10}. In this paper, it is proven  
by using the new topological-security framework, that a more secure scheme than
NW can be found to withstand attacks in these setups. This scheme, introduced in~\cite{guyeux10ter}, is 
based on the so-called chaotic iterations. The aim of this work is to 
prove that this algorithm is stego-secure and topologically-secure, to study its
qualitative and quantitative properties of unpredictability, and then to compare
it with Natural Watermarking.


The rest of this paper is organized as follows. In Section~\ref{sec:basic-recalls}, 
basic definitions and terminologies in the field of topology, chaos, and
security are recalled. In Section~\ref{sec:chaotic iterations-security-level} the stego-security of chaotic
iterations is established in some cases, whereas in
Section~\ref{sec:chaos-security-evaluation} is studied the topological-security
of chaotic iterations. Natural Watermarking and chaotic iterations are then compared in Section~\ref{sec:comparison-application-context}. The paper ends with a conclusion where our contribution is summarized, and 
planned future work is discussed.

\section{Basic recalls}\label{sec:basic-recalls}

\subsection{Chaotic iterations}
In this section, the definition and main properties of chaotic iterations are recalled~\cite{bg10:ij}.

\subsubsection{Chaotic iterations}
\label{sec:chaotic iterations}

In the sequel $S^{n}$ denotes the $n^{th}$ term of a sequence $S$ and $V_{i}$
the $i^{th}$ component of a vector $V$. Finally, the following notation
is used: $\llbracket1;N\rrbracket=\{1,2,\hdots,N\}$.

Let us consider a \emph{system} of a finite number $\mathsf{N}$ of elements (or
\emph{cells}), so that each cell has a boolean \emph{state}. A sequence of length
$\mathsf{N}$ of boolean states of the cells corresponds to a particular
\emph{state of the system}. A sequence which elements belong to $\llbracket
1;\mathsf{N} \rrbracket $ is called a \emph{strategy}. The set of all strategies
is denoted by $\mathbb{S}.$

\begin{definition}
\label{Def:chaotic iterations}

The set $\mathds{B}$ denoting $\{0,1\}$, let
$f:\mathds{B}^{\mathsf{N}}\longrightarrow \mathds{B}^{\mathsf{N}}$ be a function
and $S\in \mathbb{S}$ be a strategy. The so-called \emph{chaotic iterations} are
defined by $x^0\in \mathds{B}^{\mathsf{N}}$ and $\forall (n,i) \in
\mathds{N}^{\ast} \times \llbracket1;\mathsf{N}\rrbracket$:
\begin{equation*}
x_i^n=\left\{
\begin{array}{ll}
x_i^{n-1} & \text{ if }S^n\neq i \\
\left(f(x^{n-1})\right)_{S^n} & \text{ if }S^n=i.\end{array}\right.\end{equation*}
\end{definition}


\subsubsection{Devaney's chaotic dynamical systems}
\label{subsection:Devaney}

Consider a metric space $(\mathcal{X},d)$ and a continuous function $f$ on
$\mathcal{X}$. $f$ is said to be \emph{topologically transitive} if, for any pair
of open sets $U,V\subset \mathcal{X}$, there exists $k>0$ such that $f^{k}(U)\cap
V\neq\varnothing $. $(\mathcal{X},f)$ is said to be \emph{regular} if the set of
periodic points is dense in $\mathcal{X}$. $f$ has \emph{sensitive dependence on
initial conditions} if there exists $\delta >0$ such that, for any $x\in
\mathcal{X}$ and any neighborhood $V$ of $x$, there exists $y\in V$ and
$n\geqslant 0$ such that $|f^{n}(x)-f^{n}(y)|>\delta $. $\delta $ is called the
\emph{constant of sensitivity} of $f$. Quoting Devaney in~\cite{Devaney},

\begin{Definition}
A function $f:\mathcal{X}\longrightarrow \mathcal{X}$ is said to be \emph{chaotic} on $\mathcal{X}$ if $(\mathcal{X},f)$ is regular, topologically transitive and has sensitive dependence on initial conditions.
\end{Definition}


\subsubsection{Chaotic iterations and Devaney's chaos}
\label{sec:topological}

In this section we give outline proofs of the properties on which our secure data
hiding scheme is based. The complete theoretical framework is detailed
in~\cite{bg10:ij}.

Denote by $\Delta $ the \emph{discrete boolean metric},
$\Delta(x,y)=0\Leftrightarrow x=y.$ Given a function $f$, define the
function: $F_{f}: \llbracket1;\mathsf{N}\rrbracket\times
\mathds{B}^{\mathsf{N}} \longrightarrow \mathds{B}^{\mathsf{N}}
$ such that $F_{f}(k,E)=\left( E_{j}.\Delta (k,j)+f(E)_{k}.\overline{\Delta
(k,j)}\right)_{j\in \llbracket1;\mathsf{N}\rrbracket}.
$

Let us consider the phase space
$\mathcal{X}=\llbracket1;\mathsf{N}\rrbracket^{\mathds{N}}\times
\mathds{B}^{\mathsf{N}}$ and the map $G_{f}\left( S,E\right) =\left( \sigma (S),F_{f}(i(S),E)\right)
$, where $\sigma$ is defined by $\sigma :(S^{n})_{n\in \mathds{N}}\in \mathbb{S}\rightarrow (S^{n+1})_{n\in \mathds{N}}\in \mathbb{S}$,
and $i$ is the map $i:(S^{n})_{n\in \mathds{N}}\in \mathbb{S}\rightarrow S^{0}\in
\llbracket1;\mathsf{N}\rrbracket$. So the chaotic iterations can be described by the following iterations:
$$X^{0}\in \mathcal{X}\text{ and }X^{k+1}=G_{f}(X^{k}).$$

We have defined in~\cite{bg10:ij} a new distance $d$ between two points $(S,E),(\check{S},\check{E} )\in \mathcal{X}$
by
$d((S,E);(\check{S},\check{E}))=d_{e}(E,\check{E})+d_{s}(S,\check{S}),$
where:
\begin{itemize}
\item
$\displaystyle{d_{e}(E,\check{E})}=\displaystyle{\sum_{k=1}^{\mathsf{N}}\Delta
(E_{k},\check{E}_{k})} \in \llbracket 0 ; \mathsf{N} \rrbracket$
\item
$\displaystyle{d_{s}(S,\check{S})}=\displaystyle{\dfrac{9}{\mathsf{N}}\sum_{k=1}^{\infty
}\dfrac{|S^{k}-\check{S}^{k}|}{10^{k}}} \in [0 ; 1].$
\end{itemize}

It is then proven that,



\begin{proposition}
\label{Prop:continuite} $G_f$ is a continuous function on $(\mathcal{X},d)$.
\end{proposition}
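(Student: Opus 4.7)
The plan is a standard $\varepsilon$--$\delta$ continuity argument tailored to the product metric $d = d_e + d_s$. Fix $(S,E) \in \mathcal{X}$ and $\varepsilon > 0$; I will exhibit $\delta > 0$ such that $d((S,E),(\check S,\check E)) < \delta$ forces $d\bigl(G_f(S,E), G_f(\check S,\check E)\bigr) < \varepsilon$. Since $G_f(S,E) = (\sigma(S), F_f(i(S),E))$, the image distance splits as
\begin{equation*}
d_e\bigl(F_f(i(S),E),\, F_f(i(\check S),\check E)\bigr) \;+\; d_s(\sigma(S),\sigma(\check S)),
\end{equation*}
so it suffices to bound each summand.

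For the Boolean coordinate I would exploit discreteness twice. First, $d_e$ takes integer values in $\llbracket 0;\mathsf{N}\rrbracket$, so imposing $\delta \leq 1$ already forces $d_e(E,\check E) = 0$, i.e.\ $E = \check E$. Second, the first strategy term is also discrete: any disagreement in the initial index of $S$ contributes a fixed positive amount (of order $\tfrac{1}{\mathsf{N}}$, down to a $\tfrac{1}{10^k}$ factor depending on where the sum in $d_s$ is indexed) to $d_s(S,\check S)$, so shrinking $\delta$ further guarantees $i(S) = i(\check S)$. With $E = \check E$ and $i(S) = i(\check S)$, the definition of $F_f$ immediately yields $F_f(i(S),E) = F_f(i(\check S),\check E)$, killing the $d_e$ contribution to the image distance.

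For the strategy coordinate, a direct reindexing $j = k+1$ in the defining sum gives
\begin{equation*}
d_s(\sigma(S),\sigma(\check S)) \;=\; \frac{9}{\mathsf{N}}\sum_{k\geq 1} \frac{\lvert S^{k+1}-\check S^{k+1}\rvert}{10^{k}} \;\leq\; 10\,d_s(S,\check S) \;\leq\; 10\,\delta,
\end{equation*}
so the shift $\sigma$ expands $d_s$ by at most a factor of $10$. Taking $\delta$ to be the minimum of $1$, a constant of order $\tfrac{1}{\mathsf{N}}$ enforcing $i(S)=i(\check S)$, and $\varepsilon/10$ then yields $d(G_f(S,E),G_f(\check S,\check E)) \leq 0 + 10\delta < \varepsilon$.

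I do not expect any genuine obstacle: the entire difficulty is packaging the three ingredients --- discreteness of $d_e$, discreteness of the first strategy index read by $i$, and the at-most factor-$10$ expansion of $d_s$ under $\sigma$ --- into a single choice of $\delta$. Once those three effects are isolated, verifying the inequality is mechanical. The step I would double-check most carefully is the second: ensuring that the chosen $\delta$ really is small enough that $d_s(S,\check S) < \delta$ forces the initial index of the strategies to agree, since everything else hinges on that coincidence.
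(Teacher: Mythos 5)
The paper itself contains no proof of this proposition --- it is stated as "proven in" the reference for the underlying framework --- but your argument is the standard one used there (usually phrased with sequences rather than $\varepsilon$--$\delta$): discreteness of $d_e$ forces $E=\check E$, closeness in $d_s$ forces agreement of the initial strategy index read by $i$, and $\sigma$ expands $d_s$ by at most a factor of $10$; your proof is correct. The only caveat is the one you already flag yourself: with the paper's literal indexing ($d_s$ summing from $k=1$ while $i(S)=S^0$) the initial index is invisible to the metric, so the step forcing $i(S)=i(\check S)$ relies on the intended convention that the sum in $d_s$ ranges over all strategy terms including the one that $i$ reads.
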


In the metric space $(\mathcal{X},d)$, the vectorial negation $f_{0} :\ 
\mathbb{B}^N  \longrightarrow  \mathbb{B}^N $, $(b_1,\cdots,b_\mathsf{N}) 
\longmapsto (\overline{b_1},\cdots,\overline{b_\mathsf{N}})$ satisfies the three
conditions for Devaney's chaos: regularity, transitivity, and
sensitivity~\cite{bg10:ij}. So,

\begin{proposition}
$G_{f_0}$ is a chaotic map on $(\mathcal{X},d)$ according to Devaney.
\end{proposition}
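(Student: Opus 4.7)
The plan is to verify the three Devaney conditions for the specific map $G_{f_0}$. The key structural observation is that $f_0$ is vectorial negation, so each iteration of $G_{f_0}$ flips exactly one coordinate of the state $E$, namely the one indexed by $i(S) = S^0$, and shifts the strategy $S$. Because two flips at the same coordinate cancel, every construction below reduces to a combinatorial choice of strategy. A second useful fact is that, under the metric $d = d_e + d_s$, a small open ball around $(S,E)$ consists exactly of points $(S',E)$ whose strategy $S'$ agrees with $S$ on a sufficiently long prefix; the length of that prefix is controlled by $\varepsilon$ via the geometric series defining $d_s$.

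For regularity, given $(S,E) \in \mathcal{X}$ and $\varepsilon > 0$, I would pick $k$ large enough that the $d_s$-tail is negligible, then build a periodic point $(S',E)$ as follows: copy the first $k$ terms of $S$ into $S'$, record the finite set $T \subset \llbracket 1 ; \mathsf{N} \rrbracket$ of coordinates that have been flipped an odd number of times during those $k$ steps, append each element of $T$ once as the next $|T|$ terms of $S'$ (so the state returns to $E$), and then repeat this finite block of length $k+|T|$ indefinitely. The resulting point is periodic under $G_{f_0}$ and within $\varepsilon$ of $(S,E)$, which gives density of periodic points.

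For topological transitivity, given nonempty open sets $U,V$, I would take $(S_1,E_1) \in U$ and $(S_2,E_2) \in V$ together with prefix lengths $n_1,n_2$ determined by their radii, and construct a strategy $S'$ in three blocks: first the $n_1$ initial terms of $S_1$, then a finite correcting block $B$ listing the coordinates on which the state $F_{f_0}^{n_1}(S',E_1)$ disagrees with $E_2$, and finally the $n_2$ initial terms of $S_2$ followed by anything. Then $(S',E_1) \in U$ while $G_{f_0}^{n_1+|B|}(S',E_1) \in V$, establishing topological transitivity.

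Sensitivity then follows automatically on the infinite metric space $(\mathcal{X},d)$ from the Banks--Brooks--Cairns--Davis--Stacey theorem, so the three Devaney conditions are all met. The main obstacle is the bookkeeping in the regularity step: one must track precisely which coordinates of $E$ have odd occurrence count after the first $k$ strategy terms, and verify that the concatenated block, when repeated, actually yields a fixed point of $G_{f_0}^{k+|T|}$. The involutive nature of coordinate flips under $f_0$ is exactly what makes this closure argument work, and it is also what makes the correcting block in the transitivity step effective with length at most $\mathsf{N}$.
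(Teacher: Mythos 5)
Your proof is correct, but it takes a genuinely different route from the paper, which does not argue this proposition at all: it simply asserts that regularity, transitivity and sensitivity are established in the cited reference on chaotic iterations and draws the conclusion. Your version is a self-contained verification, and the two constructions you give (the periodic point obtained by appending the set $T$ of odd-parity coordinates to a length-$k$ prefix of $S$ and repeating the block, and the three-block strategy for transitivity with a correcting block of length at most $\mathsf{N}$) are exactly the kind of combinatorial arguments the reference uses; the involution property of $f_0$ does make both closure arguments work, and the metric $d = d_e + d_s$ does reduce small balls to prefix-agreement with a fixed state, as you claim. Where you genuinely diverge is sensitivity: the cited work proves it directly (with an explicit constant), whereas you obtain it for free from the Banks--Brooks--Cairns--Davis--Stacey theorem. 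That shortcut is legitimate here, but you should say explicitly that it requires (i) continuity of $G_{f_0}$ on $(\mathcal{X},d)$, which is precisely the paper's earlier continuity proposition and not something you re-derive, and (ii) that $\mathcal{X}$ is infinite, which holds for $\mathsf{N}\geqslant 2$ (for $\mathsf{N}=1$ the space is finite and the statement degenerates). The trade-off is that your route gives no sensitivity constant, though in this paper that loss is immaterial since the later expansivity proposition (constant $1$) is a stronger quantitative statement anyway. One cosmetic point: in the transitivity step you write $F_{f_0}^{n_1}(S',E_1)$ for what should be the state component of $G_{f_0}^{n_1}(S',E_1)$; the map $F_{f_0}$ alone is not iterated in the scheme.
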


\subsection{Using chaotic iterations as information hiding schemes}\label{sec:data-hiding-algo-chaotic iterations}

\subsubsection{Presentation of the scheme}

We have proposed in~\cite{guyeux10ter} to use chaotic iterations as an information hiding scheme, as follows (see Figure~\ref{fig:DWT}). Let:

\begin{itemize}
  \item $(K,N) \in [0;1]\times \mathds{N}$ be an embedding key,
  \item $X \in \mathbb{B}^\mathsf{N}$ be the $\mathsf{N}$ least significant coefficients (LSCs) of a given cover media $C$,
  \item $(S^n)_{n \in \mathds{N}} \in \llbracket 1, \mathsf{N} \rrbracket^{\mathds{N}}$ be a strategy, which depends on the message to hide $M \in [0;1]$ and $K$, 
  \item $f_0 : \mathbb{B}^\mathsf{N} \rightarrow \mathbb{B}^\mathsf{N}$ be the vectorial logical negation.
\end{itemize}

\begin{figure}[htb]
\begin{minipage}[b]{.45\linewidth}
  \centering
 \centerline{\epsfig{figure=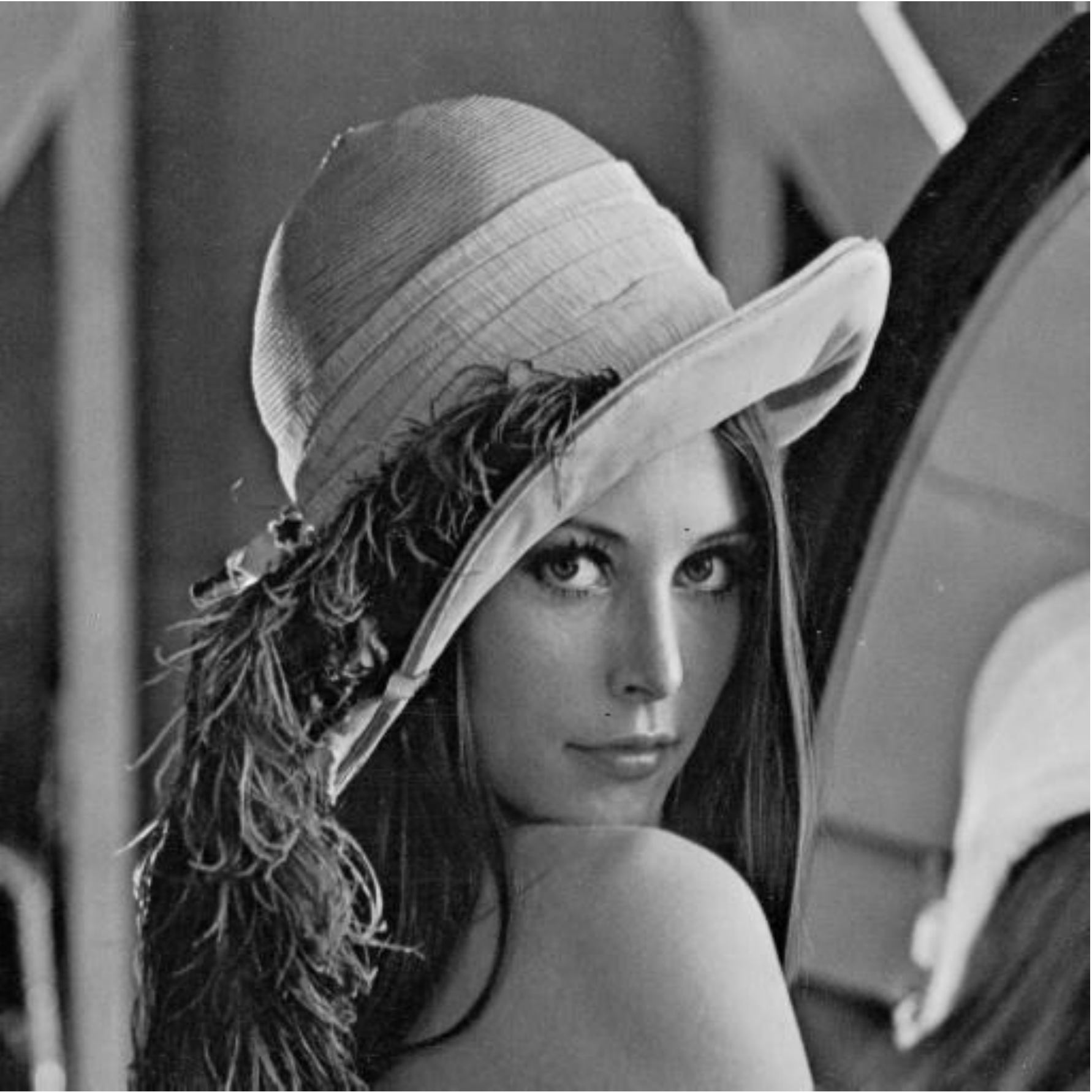,width=3.5cm}}
  \centerline{(a) Original Lena.}
\end{minipage}
\hfill
\begin{minipage}[b]{0.45\linewidth}
  \centering
 \centerline{\epsfig{figure=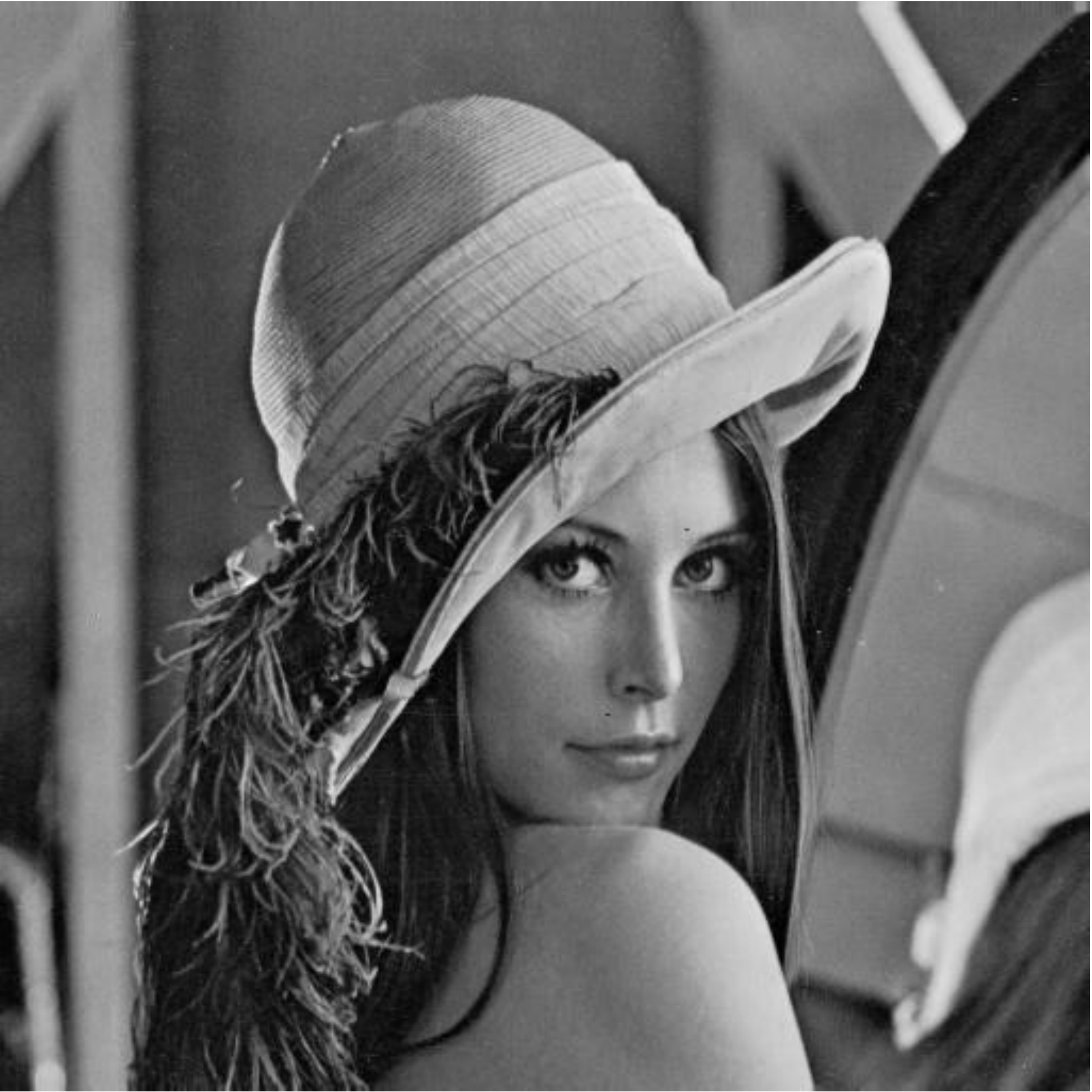,width=3.5cm}}
  \centerline{(b) Watermarked Lena.}
\end{minipage}
\caption{Data hiding with chaotic iterations}
\label{fig:DWT}
\end{figure}


So the watermarked media is $C$ whose LSCs are replaced by $Y_K=X^{N}$, where:

\begin{equation*}
\left\{
 \begin{array}{l}
X^0 = X\\
\forall n < N, X^{n+1} = G_{f_0}\left(X^n\right).\\
\end{array} \right.
\end{equation*}

In the following section, two ways to generate $(S^n)_{n \in \mathds{N}}$ are given, namely
Chaotic Iterations with Independent Strategy~(CIIS) and Chaotic Iterations with Dependent
Strategy~(CIDS). In CIIS, the strategy is independent from the cover media $X$,
whereas in CIDS the strategy will be dependent on $X$. Their stego-security are
studied in Section~\ref{sec:chaotic iterations-security-level} and their
topological-security in Section~\ref{sec:chaos-security-evaluation}.

\subsubsection{Examples of strategies}\label{sec:ciis-cids-example}
\paragraph{CIIS strategy}

Let us first introduce the Piecewise Linear Chaotic Map~(PLCM, see~\cite{Shujun1}), defined by:

\begin{definition}[PLCM]\label{Def:PLCM}
\begin{equation*}
F(x,p)=\left\{
 \begin{array}{ccc}
x/p & \text{if} & x \in [0;p] \\
(x-p)/(\frac{1}{2} - p) & \text{if} & x \in \left[ p; \frac{1}{2} \right]
\\
F(1-x,p) & \text{else.} & \\
\end{array} \right.
\end{equation*}
\end{definition}

\noindent where $p \in \left] 0; \frac{1}{2} \right[$ is a ``control parameter''.


Then, we can define the general term of the strategy $(S^n)_n$ in CIIS setup by
the following expression: $S^n = \left \lfloor \mathsf{N} \times K^n \right \rfloor +
1$, where:

\begin{equation*}\label{eq:PLCM-for-strategy}
\left\{
 \begin{array}{l}
p \in \left[ 0 ; \frac{1}{2} \right] \\
K^0 = M \otimes K\\
K^{n+1} = F(K^n,p), \forall n \leq N_0\\ \end{array} \right.
\end{equation*}

\noindent in which $\otimes$ denotes the bitwise exclusive or (XOR) between two floating part numbers (\emph{i.e.}, between their binary digits representation). Lastly, to be certain to enter into the chaotic regime of PLCM~\cite{Shujun1}, the strategy can be preferably defined by: $S^n = \left \lfloor \mathsf{N} \times K^{n+D} \right \rfloor + 1$, where $D \in \mathds{N}$. 

\paragraph{CIDS strategy}\label{sec:cids-example}
The same notations as above are used.
We define CIDS strategy as follows: $\forall k \leqslant N$,  
\begin{itemize}
\item if $k \leqslant \mathsf{N}$ and $X^k = 1$, then $S^k=k$,
\item else $S^k=1$.
\end{itemize}
In this situation, if $N \geqslant \mathsf{N}$, then only two watermarked contents are possible with the scheme proposed in Section~\ref{sec:data-hiding-algo-chaotic iterations}, namely: $Y_K=(0,0,\cdots,0)$ and $Y_K=(1,0,\cdots,0)$.


\section{Evaluation of the stego-security}\label{sec:chaotic iterations-security-level}

\subsection{Definition of
stego-security}\label{sec:stego-security-definition}

Stego-security, defined in the Simmons' prisoner problem~\cite{Simmons83}, is 
the highest security class in WOA setup~\cite{Cayre2008}.

Let $\mathds{K}$ be the set of embedding keys, $p(X)$ the probabilistic
model of $N_0$ initial host contents, and $p(Y|K_1)$ the probabilistic
model of $N_0$ watermarked contents. We suppose that each host content has been 
watermarked with the same key $K_1$ and the same embedding function $e$.

\begin{definition}
\label{Def:Stego-security}
The embedding function $e$ is stego-secure if and only if:
$$\mathbf{\forall K_1 \in \mathds{K}, p(Y|K_1)=p(X)}$$
\end{definition}



\subsection{Evaluation of the stego-security}
\label{sec:stego-security-proof}

Let us now study the stego-security of the scheme. We will prove that,

\begin{proposition}
CIIS are stego-secure.
\end{proposition}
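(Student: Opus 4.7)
The plan is to exploit the very simple form of $G_{f_0}$ when $f_0$ is the vectorial negation. With $f_0$ fixed, the update rule $X^{n+1}=G_{f_0}(X^n)$ flips exactly the coordinate of $X^n$ indexed by $S^n$ and leaves all the others unchanged. A short induction on $N$, using the explicit form of $F_{f_0}$ from Section~\ref{sec:topological}, therefore yields the closed form
\begin{equation*}
Y_K \;=\; X^N \;=\; X \oplus B(K_1,M),
\end{equation*}
where $\oplus$ denotes componentwise XOR on $\mathds{B}^{\mathsf{N}}$ and $B(K_1,M)\in\mathds{B}^{\mathsf{N}}$ is the mask whose $i$-th coordinate is the parity of $\#\{n\in\llbracket 1;N\rrbracket : S^n=i\}$. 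Establishing this identity is the first step.

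The key structural observation comes next and is precisely what distinguishes CIIS from CIDS: in CIIS the strategy $(S^n)$ is produced by iterating the PLCM starting from $K^0=M\otimes K_1$, so it is a deterministic function of $(K_1,M)$ alone. Consequently the mask $B(K_1,M)$ is \emph{independent} of the cover $X$. This is the property that will fail in the CIDS setting.

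I would then conclude under the standard WOA modelling assumption of~\cite{Cayre2008} that each host LSC vector is uniformly distributed over $\mathds{B}^{\mathsf{N}}$ and independent of the secret message. For every fixed $b$, the map $x\mapsto x\oplus b$ is a bijection of $\mathds{B}^{\mathsf{N}}$ and hence preserves the uniform law. Conditioning on $M$ and using the independence of $X$ from $(K_1,M)$ gives, for every $y\in\mathds{B}^{\mathsf{N}}$,
\begin{equation*}
p(Y=y\mid K_1) \;=\; \sum_{m} p(M=m)\,p\!\left(X\oplus B(K_1,m)=y\right) \;=\; 2^{-\mathsf{N}} \;=\; p(X=y),
\end{equation*}
and the argument lifts to the joint law of the $N_0$ i.i.d.\ host--message pairs because the corresponding masks are then themselves i.i.d.\ and jointly independent of the hosts. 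This delivers $p(Y\mid K_1)=p(X)$ in the sense of Definition~\ref{Def:Stego-security}.

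The only non-trivial step is recognising the closed form $Y_K=X\oplus B$ and isolating the CIIS-specific independence of $B$ from $X$; once these are in hand, the conclusion reduces to the textbook fact that XOR with an independent bitmask preserves the uniform measure, so I do not expect any serious technical obstacle.
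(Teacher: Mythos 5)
Your proof is correct, and it reaches the conclusion by a somewhat different route than the paper. The paper proceeds by induction on the iteration index: writing $P\left(X^{n+1}=e\right)=\sum_{k=1}^{\mathsf{N}} P\left(X^n=e+\mathbf{B}_k,\, S^n=k\right)$, invoking the CIIS-specific independence of $\{X^n=e+\mathbf{B}_k\}$ and $\{S^n=k\}$, and using $\sum_k P(S^n=k)=1$ to show that uniformity of $X^n$ propagates one step at a time to $X^{n+1}$, hence to $Y_K=X^{N_0}$. You instead telescope the whole iteration into the closed form $Y_K=X\oplus B(K_1,M)$, observe that in CIIS the mask is a deterministic function of $(K_1,M)$ alone (hence independent of the cover), and conclude by the bijectivity of $x\mapsto x\oplus b$. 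The two arguments rest on the same essential fact --- the strategy does not depend on $X$, which is exactly what breaks for CIDS --- but your decomposition is global where the paper's is local. Your version buys a cleaner conceptual picture (the scheme acts as an XOR with a cover-independent bitmask, making the contrast with CIDS immediate) and, unlike the paper, it explicitly handles the lift from a single content to the joint law of the $N_0$ host--message pairs required by Definition~\ref{Def:Stego-security}. The paper's induction is marginally more robust in that it would survive replacing the vectorial negation $f_0$ by another function for which the one-step partition argument still works, whereas your closed form is specific to $f_0$. One small bookkeeping point: with the paper's indexing the mask coordinate $B_i$ is the parity of $\#\{n : S^n=i\}$ over the indices actually consumed by the $N$ iterations, so the range $\llbracket 1;N\rrbracket$ you wrote should be aligned with the convention $X^{n+1}=G_{f_0}(X^n)$ for $n<N$; this does not affect the argument.
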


\begin{proof}
Let us suppose that $X \sim
\mathbf{U}\left(\mathbb{B}^N\right)$ in a CIIS setup. 
We will prove by a mathematical induction that $\forall n \in \mathds{N}, X^n \sim
\mathbf{U}\left(\mathbb{B}^N\right)$. The base case is immediate, as $X^0 = X \sim
\mathbf{U}\left(\mathbb{B}^N\right)$. Let us now suppose that the statement
 $X^n \sim
\mathbf{U}\left(\mathbb{B}^N\right)$ holds for some $n$. 
Let $e \in \mathbb{B}^N$ and $\mathbf{B}_k=(0,\cdots,0,1,0,\cdots,0) \in \mathbb{B}^N$ (the digit $1$ is in position $k$).
So $P\left(X^{n+1}=e\right)=\sum_{k=1}^N
P\left(X^n=e+\mathbf{B}_k,S^n=k\right).$
These two events are independent in CIIS setup, thus:
$P\left(X^{n+1}=e\right)=\sum_{k=1}^N
P\left(X^n=e+\mathbf{B}_k\right) \times P\left(S^n=k\right)$.
According to the inductive hypothesis:
$P\left(X^{n+1}=e\right)=\frac{1}{2^N} \sum_{k=1}^N
 P\left(S^n=k\right)$.
The set of events $\left \{ S^n=k \right \}$ for $k \in \llbracket 1;N
\rrbracket$ is a partition of the universe of possible, so
$\sum_{k=1}^N P\left(S^n=k\right)=1$.

Finally, 
$P\left(X^{n+1}=e\right)=\frac{1}{2^N}$, which leads to $X^{n+1} \sim
\mathbf{U}\left(\mathbb{B}^N\right)$.
This result is true $\forall n \in \mathds{N}$, we thus have proven that, $$\forall K \in [0;1], Y_K=X^{N_0} \sim
\mathbf{U}\left(\mathbb{B}^N\right) \text{ when } X \sim
\mathbf{U}\left(\mathbb{B}^N\right)$$

So CIIS defined in
Section~\ref{sec:data-hiding-algo-chaotic iterations} are stego-secure.
\end{proof}

We will now prove that,

\begin{proposition}
CIDS are not stego-secure.
\end{proposition}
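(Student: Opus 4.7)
The plan is to disprove stego-security by exhibiting a distribution on host contents for which the law of the watermarked content cannot coincide with it. The natural candidate, mirroring the CIIS proof, is $X \sim \mathbf{U}(\mathbb{B}^\mathsf{N})$.

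First, I would recall the observation made at the end of Section~\ref{sec:cids-example}: as soon as $N \geqslant \mathsf{N}$, the only two vectors that can occur as $Y_K = X^N$ under CIDS are $(0,0,\ldots,0)$ and $(1,0,\ldots,0)$. For completeness I would reproduce the short induction. During the first $\mathsf{N}$ iterations, each coordinate $j \in \llbracket 2;\mathsf{N}\rrbracket$ is driven to $0$ and then frozen: if it is already $0$ the CIDS rule picks $S^k = 1$ and position $j$ is not touched, whereas if it equals $1$ the rule picks $S^k = k$ at the moment $k=j$, which flips it to $0$ under the vectorial negation $f_0$. From step $\mathsf{N}+1$ onward the side condition $k \leqslant \mathsf{N}$ fails, so $S^k = 1$ for every remaining iteration and only coordinate $1$ can possibly change. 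Coordinates $2,\ldots,\mathsf{N}$ of $Y_K$ are therefore identically $0$.

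Second, this translates immediately into a statement about the conditional law: for every $K_1 \in \mathds{K}$, the measure $p(Y\mid K_1)$ is supported on at most $\{(0,\ldots,0),\,(1,0,\ldots,0)\}$. Consequently, for any $e \in \mathbb{B}^\mathsf{N}$ outside this pair, $p(Y = e\mid K_1) = 0$. Such vectors exist as soon as $\mathsf{N} \geqslant 2$, for instance $e = (0,1,0,\ldots,0)$; then $p(X = e) = 2^{-\mathsf{N}} > 0 = p(Y = e\mid K_1)$, violating Definition~\ref{Def:Stego-security}, and we conclude that CIDS is not stego-secure.

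The only mildly delicate point is the induction verifying that no coordinate $j \geqslant 2$ can ever become $1$ again after step $\mathsf{N}$; it is routine but worth stating because it is exactly where the \emph{dependence} of the strategy on the host is used, which is also the precise reason the CIIS independence argument collapses in the CIDS setting.
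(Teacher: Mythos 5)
Your proof is correct and follows essentially the same route as the paper's: both arguments observe that the CIDS rule confines the support of $Y_K$ to $\{(0,\ldots,0),(1,0,\ldots,0)\}$ and then exhibit a vector with zero mass under $p(Y\mid K_1)$ but positive mass under the uniform law of $X$ (the paper uses $(1,1,\ldots,1)$, you use $(0,1,0,\ldots,0)$). Your version merely spells out the supporting induction on the coordinates that the paper leaves implicit by citing its earlier remark on CIDS.
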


\begin{proof}
Due to the definition of CIDS, we have $P(Y_K=(1,1,\cdots,1))=0$. So there is
no uniform repartition for the stego-contents $Y_K$.
\end{proof}

\section{Evaluation of the
topological-security}\label{sec:chaos-security-evaluation}

\subsection{Definition}\label{sec:chaos-security-definition}


To check whether an information hiding scheme $S$ is topologically-secure or
not, $S$ must be written as an iterate process $x^{n+1}=f(x^n)$ on a metric space
$(\mathcal{X},d)$. This formulation is always possible, as it is proven
in~\cite{ih10}. So,

\begin{definition}
\label{Def:chaos-security-definition}
An information hiding scheme $S$ is said to be topologically-secure on
$(\mathcal{X},d)$ if its iterative process has a chaotic
behavior according to Devaney.

\end{definition}

It can be established that,

\begin{proposition}
CIIS and CIDS are topologically-secure.
\end{proposition}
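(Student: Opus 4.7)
The plan is to recognize that both CIIS and CIDS are realisations of one and the same dynamical system, namely the iteration of $G_{f_0}$ on $\mathcal{X}=\llbracket 1;\mathsf{N}\rrbracket^{\mathds{N}}\times\mathds{B}^{\mathsf{N}}$ equipped with the metric $d$ defined in Section~\ref{sec:topological}, and to import the earlier proposition which states that $G_{f_0}$ is chaotic on $(\mathcal{X},d)$ in the sense of Devaney.

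First, I would unfold the embedding rule from Section~\ref{sec:data-hiding-algo-chaotic iterations}: the scheme sets $X^0=X$, combines it with a strategy $S$ (produced either by the PLCM mechanism in the CIIS case or by the cover-dependent rule in the CIDS case), and then computes $X^{n+1}=G_{f_0}(X^n)$. The point to make explicit is that, as soon as the triple (key, message, cover) has determined an initial state $(S,X)\in\mathcal{X}$, the subsequent evolution of the scheme is nothing other than the orbit of that initial state under $G_{f_0}$. Hence each scheme is already written in the form $x^{n+1}=f(x^n)$ demanded by Definition~\ref{Def:chaos-security-definition}, with $f=G_{f_0}$ and the same phase space $\mathcal{X}$ as in the chaos theorem. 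Applying that theorem to regularity, transitivity and sensitivity of $G_{f_0}$ on $(\mathcal{X},d)$ gives the conclusion.

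The step that deserves the most care, and that I expect to be the only real obstacle, is to justify that the two schemes genuinely share the dynamics of $G_{f_0}$ despite generating their strategies in very different ways. The observation to stress is that topological-security, as defined here, is a property of the map $G_{f_0}$ itself on the whole space $\mathcal{X}$; the manner in which CIIS or CIDS produce their strategies only prescribes \emph{which} initial conditions in $\mathcal{X}$ will be used in practice, but it does not alter the map being iterated, nor does it restrict the ambient metric space on which regularity, transitivity and sensitivity are evaluated. Consequently, the Devaney properties established for $G_{f_0}$ transfer verbatim to both schemes, and they are both topologically-secure.
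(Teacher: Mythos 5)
Your proposal is correct and follows essentially the same route as the paper, which simply invokes the previously established fact that chaotic iterations (i.e., the map $G_{f_0}$ on $(\mathcal{X},d)$) are chaotic in Devaney's sense and concludes via Definition~\ref{Def:chaos-security-definition}. Your additional remark that the CIIS/CIDS strategy-generation mechanisms only select initial conditions and do not alter the iterated map is a useful clarification that the paper leaves implicit.
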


\begin{proof}
It has been proven in~\cite{bg10:ij} that chaotic iterations have a chaotic behavior,
as defined by Devaney.
\end{proof}

In the two following sections, we will study the qualitative and quantitative
properties of topological-security for chaotic iterations. These properties can
measure the disorder generated by our scheme, giving by doing so some important informations about the
unpredictability level of such a process.

\subsection{Quantitative property of
chaotic iterations}\label{sec:chaos-security-quantitative}

\begin{definition}[Expansivity]
A function $f$ is said to be \emph{expansive} if $
\exists \varepsilon >0,\forall x\neq y,\exists n\in \mathds{N}%
,d(f^{n}(x),f^{n}(y))\geqslant \varepsilon .$
\end{definition}



\begin{proposition}
 $G_{f_{0}}$ is an expansive chaotic dynamical system on $\mathcal{X}$ with a constant of expansivity is equal to 1.
\end{proposition}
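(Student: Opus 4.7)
The plan is to verify the expansivity inequality $d(G_{f_0}^n(X), G_{f_0}^n(X')) \geq 1$ directly from the definition of $d = d_e + d_s$, using $\varepsilon = 1$ and splitting on the way in which two distinct points $X = (S,E)$ and $X' = (\check{S}, \check{E})$ can differ. The decomposition $\mathcal{X} = \mathbb{S} \times \mathbb{B}^{\mathsf{N}}$ suggests two cases: either the Boolean components disagree, or they agree while the strategies disagree.

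In the first case, $E \neq \check{E}$, I would simply take $n = 0$: since $d_e$ counts bitwise disagreements, $E \neq \check{E}$ forces $d_e(E, \check{E}) \geq 1$, and thus $d(X, X') \geq 1 = \varepsilon$. In the second case, $E = \check{E}$ and $S \neq \check{S}$. Let $n_0$ be the smallest index at which $S^{n_0} \neq \check{S}^{n_0}$. Because the first $n_0$ symbols of $S$ and $\check{S}$ coincide, applying $G_{f_0}$ $n_0$ times produces the same Boolean component on both sides: write $E^{n_0}$ for this common vector. Then apply $G_{f_0}$ once more. By definition of $F_{f_0}$ (with $f = f_0$ the vectorial negation), $F_{f_0}(S^{n_0}, E^{n_0})$ differs from $E^{n_0}$ exactly in coordinate $S^{n_0}$, while $F_{f_0}(\check{S}^{n_0}, E^{n_0})$ differs from $E^{n_0}$ exactly in coordinate $\check{S}^{n_0}$. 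Since $S^{n_0} \neq \check{S}^{n_0}$, these two resulting vectors disagree in the two coordinates $S^{n_0}$ and $\check{S}^{n_0}$, giving $d_e \geq 2$ at step $n = n_0 + 1$, hence $d \geq 2 \geq 1$.

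The only mildly delicate step is the second case, where one must be careful to observe that the synchrony of the strategies up to index $n_0$ guarantees that the Boolean iterates stay equal up to step $n_0$, so the first disagreement among Boolean iterates occurs at step $n_0 + 1$ and actually jumps to two coordinates at once (not one). Everything else reduces to reading the definitions.

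Finally, it is worth remarking (to justify that $\varepsilon = 1$ is the \emph{best} constant) that two points $(S, E)$ and $(S, \check{E})$ with $E$ and $\check{E}$ differing in exactly one bit produce iterates that permanently differ in exactly one bit, since the same strategy applies the same flip to both. Thus $d(G_{f_0}^n(X), G_{f_0}^n(X')) = 1$ for all $n$, so no larger constant of expansivity is achievable, and $\varepsilon = 1$ is tight.
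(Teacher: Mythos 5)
Your proof is correct and follows essentially the same route as the paper's: the same case split on whether the Boolean components differ (giving $d_e \geq 1$ at $n=0$) or the strategies first differ at some index $n_0$ (so the two iterates then disagree in the two cells $S^{n_0}$ and $\check{S}^{n_0}$, hence $d_e \geq 2$). Your closing tightness remark --- that two points differing in a single bit under a common strategy keep exactly one bit of disagreement forever, so no constant larger than $1$ is achievable --- is a worthwhile addition, since the paper asserts the constant \emph{equals} $1$ but only proves the lower bound.
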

\begin{proof}
If $(S,E)\neq (\check{S};\check{E})$, then either $E\neq \check{E}$, so at least
one cell is not in the same state in $E$ and $\check{E}$. Consequently the distance between $(S,E)$ and $(%
\check{S};\check{E})$ is greater or equal to 1. Or $E=\check{E}$. So the
strategies $S$ and $\check{S}$ are not equal. Let $n_{0}$ be the first index in which the terms $S$ and $\check{S}$
differ. Then $
\forall
k<n_{0},\tilde{G}_{f_{0}}^{k}(S,E)=\tilde{G}_{f_{0}}^{k}(\check{S},\check{E})$,
and $\tilde{G}_{f_{0}}^{n_{0}}(S,E)\neq \tilde{G}_{f_{0}}^{n_{0}}(\check{S},\check{E})$. As $E=\check{E},$ the cell which has changed in $E$ at the $n_{0}$-th
iterate is not the same as the cell which has changed in $\check{E}$, so
the distance between $\tilde{G}_{f_{0}}^{n_{0}}(S,E)$ and $\tilde{G}_{f_{0}}^{n_{0}}(\check{S%
},\check{E})$ is greater or equal to 2.
\end{proof}

\subsection{Qualitative property of
chaotic iterations}\label{sec:chaos-security-qualitative}

\begin{definition}[Topological mixing]
A discrete dynamical system is said to be topologically mixing
if and only if, for any couple of disjoint open set $U, V \neq \varnothing$,
$n_0 \in \mathds{N}$ can be found so that $\forall n \geqslant n_0, f^n(U) \cap
V \neq \varnothing$.
\end{definition}
\begin{proposition}
$\tilde{G}_{f_0}$ is topologically mixing on $(\mathcal{X}', d')$.
\end{proposition}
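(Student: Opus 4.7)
The plan is to show that, given two arbitrary non-empty open sets $U, V \subset \mathcal{X}'$, one can find $n_0 \in \mathds{N}$ such that for every $n \geqslant n_0$ there exists a point of $U$ whose $n$-th iterate lies in $V$. The key feature to exploit is the product structure of the phase space: a point consists of a strategy and a boolean state, and the strategy part of the distance depends only on a tail of the sequence, while the state part is an integer-valued Hamming count. Thus one can freely surgery the initial terms of a strategy without leaving a small ball, as long as its prefix is kept long enough.

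First, I would pick representatives $(S,E) \in U$ and $(\check S, \check E) \in V$ together with radii $r_1, r_2 > 0$ such that the open balls of those radii are contained in $U$ and $V$ respectively. Using the explicit formula for $d_s$ with its geometric decay factor $1/10^k$, I would choose integers $k_1, k_2$ so that any strategy that shares its first $k_1$ coordinates with $S$ (respectively first $k_2$ coordinates with $\check S$) is at $d_s$-distance less than $r_1$ from $S$ (respectively less than $r_2$ from $\check S$). Since $d_e$ is integer-valued, the ball conditions force the state components to match exactly whenever $r_1, r_2 < 1$.

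Next, for each sufficiently large $n$, I would construct an auxiliary strategy $S^*$ as follows: the first $k_1$ terms copy $S$, so that $(S^*, E)$ lies in the ball around $(S,E)$; the terms $S^{*, n}, S^{*, n+1}, \ldots$ copy $\check S$, so that $\sigma^n(S^*)$ is within $r_2$ of $\check S$; and the intermediate block $S^{*, k_1}, \ldots, S^{*, n-1}$ is chosen so that the composition of bit-flips performed by $G_{f_0}$ transforms the intermediate state $E_1$ (obtained after the first $k_1$ iterations starting from $E$) into exactly $\check E$. If $J_1$ denotes the set of indices where $E_1$ and $\check E$ differ, the intermediate block is built by first listing each index of $J_1$ once and then padding the remaining slots with repeated pairs of flips on a single coordinate.

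The main obstacle is a parity constraint: the total number of intermediate flips $n - k_1$ necessarily has the same parity as $|J_1|$, so for some values of $n$ the construction above is infeasible. I would resolve this by letting the length of the initial matching block be either $k_1$ or $k_1 + 1$, depending on the parity of $n$. Performing one extra bit-flip at the end of the initial block changes $|J_1|$ by one, hence flips its parity, while the ball around $(S,E)$ still contains the resulting $(S^*, E)$ because its initial segment is even longer. Setting $n_0 := k_1 + 1 + \mathsf{N}$, for every $n \geqslant n_0$ exactly one of the two choices produces a valid strategy. Finally, $G_{f_0}^n(S^*, E) = (\sigma^n(S^*), \check E)$ lies in $B((\check S, \check E), r_2) \subseteq V$, which proves that $G_{f_0}^n(U) \cap V \neq \varnothing$ for all $n \geqslant n_0$ and hence yields topological mixing.
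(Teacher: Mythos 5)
You have correctly spotted the central difficulty---the parity constraint---but your proposed repair does not resolve it, and in fact no surgery on the strategy can. Each application of $G_{f_0}$ negates exactly one coordinate of the state: $F_{f_0}(k,E)_j=E_j$ for $j\neq k$ and $F_{f_0}(k,E)_k=\overline{E_k}$. Your fix inserts one extra flip at the end of the initial block; this does change the parity of $|J_1|$ by one, but it also consumes one of the $n$ iterations, so the intermediate block shrinks from length $n-k_1$ to $n-k_1-1$. The feasibility condition becomes $n-k_1-1\equiv |J_1|\pm 1 \pmod 2$, which is exactly the original condition $n-k_1\equiv |J_1| \pmod 2$: the two parity changes cancel. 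The obstruction is intrinsic to the one-coordinate-per-step dynamics. Since every point of a ball $B$ of radius $<1$ carries the same state $E$, every state occurring in $G_{f_0}^n(B)$ is obtained from $E$ by exactly $n$ single-bit flips, hence lies at a Hamming distance from $E$ congruent to $n$ modulo $2$. Taking $V$ to be a small ball around a point whose state has the opposite parity gives $G_{f_0}^n(U)\cap V=\varnothing$ for every other $n$, so $G_{f_0}$ on $(\mathcal{X},d)$ is \emph{not} topologically mixing and no argument along your lines can close the gap.

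The proposition is stated for $\tilde{G}_{f_0}$ on $(\mathcal{X}',d')$, the modified system of the cited references in which an iteration may update a \emph{set} of cells---possibly empty, possibly of cardinality greater than one---precisely so that this parity obstruction disappears (one can realize any prescribed state change in finitely many steps and then idle with empty updates). Your argument never engages with that modified dynamics; it silently works with the single-update map, for which the conclusion is false. Apart from this, your architecture (direct verification of mixing on two open sets, with prefix/tail estimates on $d_s$ and the observation that $d_e<1$ forces equal states) is sound and is a legitimate alternative to the paper's route via the lemma that some iterate of every ball covers all of $\mathcal{X}'$; but the proof can only be completed after passing to $\tilde{G}_{f_0}$, where the intermediate block can be padded freely and the parity bookkeeping becomes vacuous.
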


This result is an immediate consequence of the lemma below.

\begin{lemma}
For any open ball $B$ of $\mathcal{X}'$, an index $n$ can be found such that $\tilde{G}_{f_0}^n(B) = \mathcal{X}'$.
\end{lemma}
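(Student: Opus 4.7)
The plan is to exploit the product structure $\mathcal{X}'\subseteq\mathbb{S}\times\mathbb{B}^{\mathsf{N}}$ together with the fact that $f_0$ is the vectorial negation, so that every application of $\tilde{G}_{f_0}$ flips exactly one bit of the boolean component at the position dictated by the current strategy term. Given any open ball $B$, I would first pass to a sub-ball $B'\subseteq B$ of radius $r'<1$ around the centre $(S,E)$. Because $d_e$ takes integer values in $\llbracket 0;\mathsf{N}\rrbracket$, this forces the boolean component of every $(S',E')\in B'$ to be $E$; and the constraint $d_s(S,S')<r'$, combined with the geometric-series bound on $d_s$, forces the strategies in $B'$ to agree with $S$ on a common initial prefix of some length $k_0\approx \lceil-\log_{10}r'\rceil$, while leaving their tails completely free.

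I would then split the $n$ iterations into two phases. In the first phase, $\tilde{G}_{f_0}^{\,k_0}$ consumes the fixed prefix and therefore drives every point of $B'$ deterministically to the same boolean state $E^\star=F_{f_0}(S_{k_0},F_{f_0}(\ldots F_{f_0}(S_1,E)\ldots))$, while the shifted strategy $\sigma^{k_0}(S')$ now ranges over the whole of $\mathbb{S}$. Thus $\tilde{G}_{f_0}^{\,k_0}(B')$ already exhausts the slice $\mathbb{S}\times\{E^\star\}$. In the second phase, I would perform $m\geqslant \mathsf{N}$ further iterations: because the next $m$ strategy terms inherited from $B'$ are free, I can list within them the coordinates on which $E^\star$ and a prescribed target $E''\in\mathbb{B}^{\mathsf{N}}$ differ, padding with compensating moves so that $E^\star$ is eventually transformed into $E''$. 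The tail $\sigma^{k_0+m}(S')$ remains arbitrary, so with $n=k_0+m$ the full image $\tilde{G}_{f_0}^{\,n}(B)\supseteq \tilde{G}_{f_0}^{\,n}(B')$ covers $\mathbb{S}\times\mathbb{B}^{\mathsf{N}}=\mathcal{X}'$.

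The main obstacle I expect is the combinatorial bookkeeping of the second phase. Because a single step flips exactly one bit, the Hamming distance $d_H(E^\star,E'')$ and the number $m$ of strategy terms consumed to reach $E''$ must have the same parity, so a naive choice of $m$ only covers half of $\mathbb{B}^{\mathsf{N}}$. I would handle this either by allowing ``waste'' flips of a single coordinate twice to absorb the parity mismatch (which costs two extra slots and simply raises $m$ by at most two), or by appealing to the specific definitions of $\mathcal{X}'$ and $\tilde{G}_{f_0}$ from~\cite{bg10:ij}, which are tailored to remove this obstruction. Once this bookkeeping is in place, the index $n$ depends only on the radius of $B$ through $k_0$ and on the cover length $\mathsf{N}$, which yields the lemma and hence the topological mixing of $\tilde{G}_{f_0}$ stated in the preceding proposition.
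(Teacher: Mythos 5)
Your two-phase decomposition---shrink to a sub-ball of radius less than $1$ so that the boolean state is pinned and the strategies share a prefix of length $k_0$ with free tails, burn the prefix to reach a common state $E^\star$ on the full slice $\mathbb{S}\times\{E^\star\}$, then steer with the free tail---is exactly the paper's argument, and your phase~1 is correct. The value of your write-up is that you noticed an obstruction the paper's proof passes over in silence; but your first proposed repair does not work, and this is worth being precise about. If every iteration flips exactly one coordinate (as it does for $G_{f_0}$ on $\mathcal{X}$ as defined in Section~\ref{sec:topological}, since $F_{f_0}(k,E)$ changes only the $k$-th bit), then after $m$ further iterations the states reachable from $E^\star$ are exactly those at Hamming distance at most $m$ \emph{and of the same parity as} $m$. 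Inserting a wasted double-flip adds $2$ to $m$ and therefore does not change this parity class: for any single fixed $n$ the image $\tilde{G}_{f_0}^{\,n}(B)$ misses half of $\mathbb{B}^{\mathsf{N}}$, and the lemma is simply false for the one-cell-per-iteration dynamics. No amount of bookkeeping fixes this.

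The only viable route is your second option: the statement concerns $\tilde{G}_{f_0}$ on $(\mathcal{X}',d')$, which this paper never defines but which in~\cite{bg10:ij} iterates with strategy terms ranging over \emph{subsets} of $\llbracket 1;\mathsf{N}\rrbracket$ (the empty set included), so that one iteration can flip any set of coordinates or none. There the parity obstruction disappears and a single extra iteration suffices: take the $(k_0+1)$-th strategy term to be the set of coordinates where $E^\star$ and the target state differ, and the tail to be the target strategy, giving the uniform index $n=k_0+1$. Note that the paper's own proof is deficient on exactly the point you flagged: its iteration count $k+|s|$ depends on the target $(E',S')$ through $|s|$, so as written it only shows $\bigcup_n \tilde{G}_{f_0}^{\,n}(B)=\mathcal{X}'$, not the existence of one $n$ with $\tilde{G}_{f_0}^{\,n}(B)=\mathcal{X}'$ (and topological mixing in fact requires the conclusion for all $n\geqslant n_0$, so padding with empty strategy terms is needed there too). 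In short: same skeleton as the paper, a correctly identified gap, a first patch that fails, and a second patch that is the right one but cannot be checked until the definitions of $\mathcal{X}'$ and $\tilde{G}_{f_0}$ are actually stated.
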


\begin{proof}
Let $B=B((E,S),\varepsilon)$ be an open ball, which the radius can be considered as strictly less than 1.
All the elements of $B$ have the same state $E$ and are such that an integer $k \left(=-\log_{10}(\varepsilon)\right)$ satisfies:
\begin{itemize}
\item all the strategies of $B$ have the same $k$ first terms,
\item after the index $k$, all values are possible.
\end{itemize}

Then, after $k$ iterations, the new state of the system is $\tilde{G}_{f_0}^k(E,S)_1$ and all the strategies are possible (all the points $(\tilde{G}_{f_0}^k(E,S)_1,\textrm{\^{S}})$, with any $\textrm{\^{S}} \in \mathbb{S}$, are reachable from $B$).

We will prove that all points of $\mathcal{X}'$ are reachable from $B$. Let
$(E',S') \in \mathcal{X}'$ and $s_i$ be the list of the different cells between
$\tilde{G}_{f_0}^k(E,S)_1$ and $E'$. We denote by $|s|$ the size of the sequence
$s_i$. So the point $(\check{E},\check{S})$ of $B$ defined by: $\check{E} = E$,
$\check{S}^i = S^i, \forall i \leqslant k$, $\check{S}^{k+i} = s_i, \forall i
\leqslant |s|$, and $\forall i \in \mathds{N}, S^{k + |s| + i} = S'^i$
is such that $\tilde{G}_{f_0}^{k+|s|}(\check{E},\check{S}) = (E',S')$. This concludes the proofs of the lemma and of the proposition.
\end{proof}
\section{Comparison between
spread-spectrum and chaotic
iterations}\label{sec:comparison-application-context}

The consequences of topological mixing for data hiding are multiple. Firstly, 
security can be largely improved by considering
the number of iterations as a secret key. An attacker
will reach all of the possible media when iterating without this key.
Additionally, he cannot benefit from a KOA setup, by studying media in the
neighborhood of the original cover. Moreover, as in a topological mixing
situation, it is possible that any hidden message (the initial condition), is
sent to the same fixed watermarked content (with different numbers of
iterations), the interest to be in a KMA setup is drastically reduced. Lastly, as
all of the watermarked contents are possible for a given hidden message, depending
on the number of iterations, CMA attacks will fail.

The property of expansivity reinforces drastically the sensitivity in the aims of
reducing the benefits that Eve can obtain from an attack in KMA or KOA setup. For
example, it is impossible to have an estimation of the watermark by moving the
message (or the cover) as a cursor in situation of expansivity: this cursor will
be too much sensitive and the changes will be too important to be useful. On the
contrary, a very large constant of expansivity $\varepsilon$ is unsuitable: the
cover media will be strongly altered whereas the watermark would be undetectable.


Finally, spread-spectrum is relevant when
a discrete and secure data hiding technique is required in WOA setup. However,
this technique should not be used in KOA and KMA setup, due to its lack of
expansivity.
schemes, which are expansive.

\section{Conclusion and future work}\label{sec:conclusion}


In this paper, the links between stego-security and topological-security has
been deepened. The information hiding scheme presented in~\cite{guyeux10ter}, which is
based on chaotic iterations, has been recalled and its level of security has been
studied. It has been proven that this algorithm is twice stego and
topologically-secure. This was already the case for spread-spectrum techniques,
as it has been established in~\cite{ih10}. Moreover, as for spread-spectrum, chaotic iterations
possess the qualitative property of topological mixing, which are useful to
withstand attacks. However, unlike spread-spectrum, chaotic iterations are
expansive, so this scheme is better than spread-spectrum in KOA and KMA setups.
Incidentally, this result shows that the new framework for security tends to
improve the ability to compare data hiding scheme. In future work, we will give a
better understanding of the links between these two security frameworks.
Additionally, the comparison between spread-spectrum and chaotic iterations
outlined in this paper will be extended. The security of other existing schemes
will be studied in the framework of topological-security. Last, but not least,
the way to understand these new tools in terms of data hiding aims will be enhanced: this
study is required to make topological-security framework truly useful in
practice.

\newpage
\bibliographystyle{plain}
\bibliography{jabref}

\end{document}